\newtheorem{theorem}{Theorem}
\newtheorem{lemma}[theorem]{Lemma}
\newtheorem{definition}[theorem]{Definition}
\newtheorem{fact}[theorem]{Fact}
\newtheorem{remark}[theorem]{Remark}
\newtheorem{corollary}[theorem]{Corollary}
\newcommand{\eps}{\epsilon}
\newcommand{\E}{\mathbb{E}}
\newcommand{\Z}{{\mathbb{Z}}}
\newcommand{\R}{{\mathbb{R}}}
\DeclareMathOperator{\Vol}{Vol}
\newmdenv[linewidth=1]{mystyle}
\begin{document}
\title{Approximate Nearest Neighbor Search in $\ell_p$}
\author{Huy L. Nguy\~{\^{e}}n\\Princeton}
\date{}
\maketitle
\begin{abstract}
We present a new locality sensitive hashing (LSH) algorithm for $c$-approximate nearest neighbor search in $\ell_p$ with $1<p<2$. For a database of $n$ points in $\ell_p$, we achieve $O(dn^{\rho})$ query time and $O(dn+n^{1+\rho})$ space, where $\rho \le O((\ln c)^2/c^p)$. This improves upon the previous best upper bound $\rho\le 1/c$ by Datar et al. (SOCG 2004), and is close to the lower bound $\rho \ge 1/c^p$ by O'Donnell, Wu and Zhou (ITCS 2011). The proof is a simple generalization of the LSH scheme for $\ell_2$ by Andoni and Indyk (FOCS 2006).
\end{abstract}
\section{Introduction}
Approximate nearest neighbor search has been studied extensively in the last few decades. In this problem, a database of $n$ points in $\R^d$ is preprocessed so that given a query point $q$, if the point closest to $q$ in the database is at distance $r$ from $q$, then the algorithm will return a point $p$ in the database within distance $cr$ from $q$. The parameter $c>1$ is the approximation factor of the algorithm. At the moment, the best approach giving good guarantees both in time and space in high dimensions is locality sensitive hashing (LSH)~\cite{HIM12}. The running time and space of LSH-based algorithms depend on a parameter $\rho$, which is determined by the metric space, the approximation factor $c$, and the hash functions: the query time is $dn^{\rho+o(1)}$ and the space is $nd+n^{1+\rho+o(1)}$. For the norm $\ell_p$ with $1\le p\le 2$, it is known that there exists a distribution over hash functions such that $\rho\le 1/c$~\cite{DIIM04}. For only the special case of $\ell_2$, it is known that we can also get $\rho=1/c^2+o(1)$~\cite{AI06}. In~\cite{OWZ11}, it was shown that $\rho\ge c^{-p}$ for all $p\in[1,2]$. In this paper we give a new LSH for $1<p<2$ achieving $\rho = O((\ln c)^2 c^{-p})$. Independently, there is an algorithm~\cite{Indyk13} close to matching the lower bound from~\cite{OWZ11} but we believe it is worthwhile to present the argument here as it is simple and might be applicable elsewhere.

\section{Preliminaries}
Fist we need the formal definition of LSH.
\begin{definition}[\cite{HIM12}]
A family of hash functions $h$ is $(r, cr, p_1, p_2)$-sensitive if
\begin{itemize}
\item If $\|x-y\|_p \le r$ then $\Pr[h(x)=h(y)] \ge p_1$.
\item If $\|x-y\|_p \ge cr$ then $\Pr[h(x)=h(y)] \le p_2$.
\end{itemize}
Define $\rho = \frac{\ln 1/p_1}{\ln 1/p_2}$.
\end{definition}
Given such a hash family for every $r$, one immediately gets an algorithm for approximate nearest neighbor search.
\begin{theorem}[\cite{HIM12}]
If for every $r$, there exists an $(r, cr, p_1, p_2)$-sensitive hash family with the parameter $\rho$ uniformly bounded from above by $\rho_0$, evaluation time $dn^{o(1)}$, and $1/p_1 = n^{o(1)}$, then there is an algorithm for finding $c$-approximate nearest neighbor with query time $dn^{\rho_0+o(1)}$ and space $O(dn)+n^{1+\rho_0+o(1)}$.
\end{theorem}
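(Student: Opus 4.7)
The plan is to apply the standard amplification-plus-repetition LSH construction: boost the gap between $p_1$ and $p_2$ by concatenating hash functions, then take many independent copies so that any close database point collides with the query in some copy while far points rarely do. First I would reduce approximate nearest neighbor to its decision variant at a fixed scale $r$ (given $q$, either return a point at distance $\le cr$ or certify that no point is within $r$); the reduction from~\cite{HIM12} handles this using $O(\log n)$ decision instances on geometrically spaced scales, contributing a polylogarithmic overhead absorbed into the $n^{o(1)}$ terms.

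For a fixed $r$, set $k = \lceil \log_{1/p_2} n \rceil$ and $L = \lceil 1/p_1^k \rceil$, and draw $L$ independent tuples $g_i = (h_{i,1}, \ldots, h_{i,k})$ from the given sensitive family. Under any single $g_i$ a far point ($\|p-q\|_p \ge cr$) collides with $q$ with probability $\le p_2^k \le 1/n$, while a close point ($\|p-q\|_p \le r$) collides with probability $\ge p_1^k$, so it is recovered in some tuple with probability at least $1 - (1-p_1^k)^L \ge 1-e^{-1}$. By the definition of $\rho$ we have $p_1^k = p_1 \cdot n^{-\rho}$, and combined with the hypothesis $1/p_1 = n^{o(1)}$ this gives $L = n^{\rho+o(1)}$.

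Preprocessing stores the $n$ points (space $O(dn)$) and inserts them into $L$ hash tables (space $O(nL) = n^{1+\rho+o(1)}$). A query computes all $g_i(q)$, scans at most $3L$ candidates found in the matching buckets, and returns one within distance $cr$. Correctness follows because the expected number of far-point collisions across tables is $nL \cdot p_2^k \le L$, so by Markov the cap is reached only with small constant probability, while close points are recovered with constant probability; both failures can be boosted to $1 - 1/\poly(n)$ by $O(\log n)$ independent repetitions. The query time is $L$ hash evaluations of cost $k\cdot dn^{o(1)}$ each, plus $O(dL)$ distance computations, totalling $dn^{\rho+o(1)}$.

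There is no genuine obstacle here — this is essentially the original construction of~\cite{HIM12} reproduced for completeness. The only subtlety is the bookkeeping that verifies each of $k$, $L$, and the per-hash cost contributes at most an $n^{o(1)}$ factor; this follows immediately from the three hypotheses on the family, namely $\rho \le \rho_0$, $1/p_1 = n^{o(1)}$, and evaluation time $dn^{o(1)}$.
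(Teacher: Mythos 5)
Your proposal is correct and is essentially the standard amplification-and-repetition construction of~\cite{HIM12} that the paper cites without reproving: the choices $k=\lceil\log_{1/p_2}n\rceil$ and $L=\lceil p_1^{-k}\rceil=n^{\rho+o(1)}$, the Markov bound on far-point collisions, and the reduction to $O(\log n)$ fixed-scale decision instances are exactly the cited argument. The bookkeeping showing that $k$, $1/p_1$, and the evaluation time each contribute only $n^{o(1)}$ factors is handled correctly.
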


The rest of the paper focuses on analyzing $\rho$ for a fixed $r$. Because we can always scale all distances, assume wlog that $r=1$.

Let $B_p(x,r)$ denote the $\ell_p$ ball of radius $r$ centered at $x$. Let $V_t$ be the volume of $B_p(\vec{0}, w)$ in $\R^t$. Let $L_t$ denote the lattice $\{\sum_{i=1}^t \Delta a_i w e_i\ |\ a_i\in \Z\}$, where $e_i$ is the $i$th standard basis vector, $\Delta = 4$, and $w=O(c\ln c)$.

\section{The Hash Function}
The hash function works in a similar way to~\cite{AI06}, with some modifications to the parameters. First it uses the $p$-stable distribution to reduce the dimension to $t=\Theta((c\ln c)^{p})$. Then, it partitions the $t$-dimensional space using lattices of balls of radius $w=O(c\ln c)$. See Figure~\ref{fig:alg} for details.
\begin{figure}[ht]
\begin{mystyle}
\paragraph{Choosing a hash function $h\in \mathcal{H}$}
\begin{enumerate}
\item For each $u\in \{1,2,\ldots, U_t\}$, pick a random shift $s_u\in [0, \Delta w]^t$ to specify the shifted lattice $s_u+ L_t$.
\item Pick a random matrix $A\in \R^{t\times d}$ whose entries are i.i.d. $p$-stable random variables with the scale parameter 1. Let $A'=T^{-1/p}A$ ($T$ is the threshold defined in Lemma~\ref{lem:concentration}).
\end{enumerate}
\paragraph{Applying the hash function $h$ to a point $x\in \R^d$}
\begin{enumerate}
\item Let $x' = A'x$.
\item Find the smallest $u\in \{1,2, \ldots, U_t\}$ such that there exists a point $y\in L_t$ satisfying $x\in B_p(y+s_u, w)$. If $u$ exists then the hash value of $x$ is the pair $(u, y)$. Otherwise, the hash value of $x$ is the pair $(0,\vec{0})$.
\end{enumerate}
\end{mystyle}
\caption{\label{fig:alg}The algorithm for computing the hash value of a given point $x\in \R^d$.}
\end{figure}
\section{Analysis}
First, we bound the number of lattices of balls needed to cover the entire space $\R^t$. This number determines the running time of the hash function as finding the closest ball in a lattice to a given point is simple: one just needs to find the closest lattice point in each coordinate separately. While this operation uses the floor function, we believe the usage is justified as the coordinates do not encode special information and it is also widely used in the LSH literature. The following lemma is a generalization of~\cite[Lemma 3.2.2]{Andoni09} with an analogous proof.
\begin{lemma}[\cite{Andoni09}]\label{lem:cover}
Consider the $t$-dimensional space $\R^t$ and let $\delta$ be a positive constant. Let $B_{u}$ be the collection of $\ell_p$ balls centered at lattice points of $s_u+L_t$, where $s_u$ is a uniformly random vector in $[0,\Delta w]^t$. If $U_t = \Delta^t t^{t/p+1}\ln (\Delta t/\delta)$ then the collections $B_1, \ldots, B_{U_t}$ cover all of $\R^t$ with probability at least $1-\delta$.
\end{lemma}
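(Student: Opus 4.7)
The plan is a standard net-and-union-bound argument. By the periodicity of each shifted lattice $s_u + L_t$, it suffices to cover a single fundamental cell $C := [0, \Delta w]^t$ with probability at least $1-\delta$. From there I would (i) compute the probability that a fixed point of $C$ is covered by one random shifted lattice, (ii) replace the uncountable cell by a sufficiently fine net, and (iii) union-bound over the net and the $U_t$ independent lattices.

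For step (i), fix $x \in C$. The event $x \in B_p(y + s_u, w)$ for some $y \in L_t$ is, modulo $L_t$, equivalent to $s_u$ landing in the image of $B_p(x, w)$ inside the torus $\R^t / L_t$. Because $\Delta = 4$ and every nonzero vector of $L_t$ has $\ell_p$-norm at least $4w > 2w$, this image is an injectively embedded copy of $B_p(x, w)$, so the single-lattice coverage probability is exactly $V_t / (\Delta w)^t$. Combining the closed form $V_t = (2w\,\Gamma(1+1/p))^t / \Gamma(1 + t/p)$ with Stirling's approximation would give $V_t / (\Delta w)^t \ge c_0^t / (\Delta^t \cdot t^{t/p} \cdot \poly(t))$ for some absolute constant $c_0 > 1$ depending only on $p$.

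For step (ii), pick an $\eps$-net $N \subset C$ in the $\ell_p$ metric of spacing $\eps := w/t$; a standard coordinate grid gives $|N| \le (\Delta t)^{O(t)}$. If $x_0 \in N$ lies in $B_p(y + s_u, w - \eps)$, the triangle inequality puts every $x$ with $\|x - x_0\|_p \le \eps$ inside $B_p(y + s_u, w)$, so it suffices to cover $N$ by the slightly shrunken balls. The shrinkage costs only a factor $(1 - 1/t)^t \ge 1/e$ in volume, so the net-point coverage probability $q$ satisfies the same lower bound as in step (i), up to a constant. For step (iii), independence of the $U_t$ lattices gives per-net-point failure $(1-q)^{U_t} \le e^{-qU_t}$, and a union bound yields total failure at most $|N| \cdot e^{-qU_t}$. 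Plugging in the stated $U_t$, the extra factor of $t$ in the exponent $t/p + 1$ absorbs the $\poly(t)$ slack in $1/q$, while the log factor $\ln(\Delta t/\delta)$ absorbs $\ln|N| + \ln(1/\delta)$, driving the total failure below $\delta$.

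The argument itself is elementary; the only delicate point is the bookkeeping. Packing Stirling's $\sqrt{t/p}$, the gamma constants, the $1/e$ from the radius slack, and the net cardinality $(\Delta t)^{O(t)}$ all into the single term $t^{t/p+1}\ln(\Delta t/\delta)$ requires choosing $\eps$ on the order of $w/t$: any coarser net would bloat the log factor, any finer net would shrink the usable volume too much. The constant $c_0 > 1$ from step (i) contributes an exponential-in-$t$ buffer that swallows all remaining sub-exponential slack at the end, which is the only place where one really uses that the $\ell_p$-ball in $\R^t$ is not too much smaller than its bounding cube in an exponential sense.
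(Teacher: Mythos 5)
Your proof is correct in the regime the paper actually uses ($t\to\infty$; recall $t=\Theta((c\ln c)^p)$), but its key geometric step is genuinely different from the paper's. The paper never touches the volume of an $\ell_p$ ball: it partitions the fundamental cube $[0,\Delta w]^t$ into $N=(\Delta t^{1/p})^t$ subcubes of side $w/t^{1/p}$, notes that such a subcube has $\ell_p$-diameter exactly $w$, so the single event that $s_u$ lands in a given subcube already forces the ball of $B_u$ centered at $s_u$ to contain that whole subcube; each subcube is then covered with probability exactly $1/N$ per shift, and $U_t\ge N\ln(N/\delta)$ closes the union bound. That argument is valid verbatim for every $t$ and needs no Gamma functions, no Stirling, and no radius shrinkage. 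Your route instead computes the exact per-point coverage probability $V_t/(\Delta w)^t$ (your injectivity-on-the-torus observation, using $\Delta w=4w>2w$ so the lattice translates of the ball are disjoint, is the right justification), lower-bounds it via the closed-form ball volume, and pays a constant factor to pass from net points to their $\eps$-neighborhoods. What this buys: it makes explicit that the stated $U_t$ carries exponential slack --- the ratio of the ball volume to the subcube volume is precisely your $c_0^t/\poly(t)$ --- and it generalizes to any symmetric convex body whose volume you can estimate, whereas the paper's shortcut is tied to the cube-versus-$\ell_p$-ball relationship. Two small repairs: $(1-1/t)^t$ is at most $1/e$, not at least (you only need $(1-1/t)^t\ge 1/4$ for $t\ge 2$, which holds and suffices); and the claim $c_0>1$ deserves a line --- Stirling gives $c_0=2\Gamma(1+1/p)(ep)^{1/p}$, which is decreasing in $p$ on $(1,2)$ and still exceeds $4$ as $p\to 2$, so it is comfortably above $1$ throughout.
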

\begin{proof}
The proof is a standard covering argument. We present it here for completeness. By the regularity of the lattices, the whole space is covered iff the cube $[0,\Delta w]^t$ is covered. Divide the cube into subcubes of side length $w/t^{1/p}$. If some $s_u$ lies in a subcube then the whole subcube is covered. The probability that some $s_u$ lies in a particular cube is $1/N$, where $N$ is the number of subcubes. We have $N=(\Delta t^{1/p})^t$. If $U_t \ge N\ln (N/\delta)$ then by the union bound, the probability that some subcube is not covered is bounded by
$$N(1-1/N)^{U_t} \le \exp(\ln N- U_t/N) = \exp(-\ln 1/\delta) = \delta$$
\end{proof}

\begin{corollary}
For $\delta=\exp(-\Theta(t))$, the time to evaluate the hash function is $dc^{O((c\ln c)^p)}$.
\end{corollary}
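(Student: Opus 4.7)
The plan is to trace through the two stages of the hash evaluation (Step~1: computing $x' = A'x$; Step~2: scanning the $U_t$ shifted lattices for a covering ball) and simply plug in the parameter values $t=\Theta((c\ln c)^p)$, $w=O(c\ln c)$, $\Delta=4$, and the chosen $\delta=\exp(-\Theta(t))$.

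First, Step~1 is a dense $t\times d$ matrix-vector product, which costs $O(td)=O(d\cdot (c\ln c)^p)$. For Step~2, checking whether a given point $x'$ lies in some ball of $s_u+L_t$ reduces, coordinate by coordinate, to locating the nearest lattice point in a one-dimensional shifted lattice (via a floor operation), followed by checking whether $x'$ falls inside the $\ell_p$-ball of radius $w$ around that point; both operations take $O(t)$ time for each $u$. Thus the total cost of Step~2 is $O(t\cdot U_t)$.

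Next I would substitute the expression $U_t=\Delta^t t^{t/p+1}\ln(\Delta t/\delta)$ from Lemma~\ref{lem:cover}. With $\delta=\exp(-\Theta(t))$, the log factor is $O(t)$, so $U_t = 2^{O(t)}\cdot t^{O(t)} = 2^{O(t\log t)}$. Plugging in $t=\Theta((c\ln c)^p)$ gives $\log t = \Theta(\log c)$ (treating $p$ as a constant in $(1,2)$), hence
\[
t\log t \;=\; \Theta\bigl((c\ln c)^p\cdot \ln c\bigr),
\]
so $t\cdot U_t = 2^{O((c\ln c)^p \ln c)} = c^{O((c\ln c)^p)}$. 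Adding the $O(td)$ from Step~1 yields a total of $d\cdot c^{O((c\ln c)^p)}$, as desired.

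There is no real obstacle here; the only mildly subtle point is verifying that the $\ln(\Delta t/\delta)$ factor is absorbed into the $c^{O((c\ln c)^p)}$ bound, which requires the specific choice $\delta=\exp(-\Theta(t))$ (so the log factor becomes $O(t)$ rather than something polynomial in $n$ that could inflate the exponent). A larger $\delta$ would suffice for correctness, but this choice keeps the failure probability exponentially small while not affecting the asymptotic evaluation time.
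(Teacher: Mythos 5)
Your proposal is correct and matches the paper's (implicit) argument: the corollary is stated without proof precisely because it follows from substituting $\delta=\exp(-\Theta(t))$ and $t=\Theta((c\ln c)^p)$ into the expression for $U_t$ from Lemma~\ref{lem:cover}, which is exactly the computation you carry out. Your accounting of the two stages and the observation that the $\ln(\Delta t/\delta)$ factor becomes $O(t)$ and is absorbed into $c^{O((c\ln c)^p)}$ are both right.
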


To analyze the first step of the hash function, we need a concentration bound for $p$-stable distribution. The proof is similar to that of a similar bound for $p=1$ by~\cite{Indyk06}.

\begin{lemma}\label{lem:concentration}
Let $p$ be a constant in $(1,2)$. Let $x\in \R^ d$ and a random matrix $A\in \R^{t\times d}$ whose entries are i.i.d. $p$-stable random variables with the scale parameter 1. For $t\rightarrow \infty$, there exists a threshold $T= T(t, \eps)$ such that
\begin{itemize}
\item $\Pr[\|Ax\|_p^p < T\|x\|_p^p] \le \exp(-\Theta(t^{1-\eps p}(\eps \ln t)^2))$
\item $\Pr[\|Ax\|_p^p > 2^{(4+p)/2}\eps^{-1}T\|x\|_p^p] \le 1/2$
\end{itemize}
\end{lemma}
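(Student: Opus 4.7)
The plan is to exploit $p$-stability to reduce the bound to a scalar concentration statement. For each $i$, $(Ax)_i = \sum_j A_{ij}x_j$ is a $p$-stable random variable with scale $\|x\|_p$, so after rescaling we may assume $\|x\|_p = 1$ and work with $S := \sum_{i=1}^t |Y_i|^p$ where $Y_1,\ldots,Y_t$ are i.i.d.\ standard $p$-stable. The key obstacle is that $\mathbb{E}|Y_i|^p = \infty$ for $p \in (1,2)$, so there is no finite mean to center around; the proof must instead use the sharp tail asymptotic $\Pr[|Y_i|>y] = \Theta(y^{-p})$ and a two-scale truncation argument, mirroring Indyk's treatment of the Cauchy ($p=1$) case.

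For the lower tail, I would truncate at $M := t^{\eps}$ and put $Z_i := |Y_i|^p\mathbf{1}_{|Y_i|\le M}$. The tail asymptotic gives
\[
\mu := \mathbb{E}Z_i = \Theta(\eps \ln t), \qquad \mathbb{E}Z_i^2 = \Theta(M^p) = \Theta(t^{\eps p}),
\]
the second equality because the $2p$-th moment integral is dominated by the upper cutoff. I would then take $T := t\mu/2 = \Theta(t\eps\ln t)$. Since $S \ge \sum_i Z_i$, it suffices to bound $\Pr[\sum_i Z_i < T]$, and the one-sided Bernstein inequality for sums of independent non-negative random variables --- which follows from $e^{-x} \le 1-x+x^2/2$ for $x\ge 0$ and uses only the second moment, not the $L^\infty$ norm --- gives
\[
\Pr\!\Bigl[\sum_i Z_i < t\mu/2\Bigr] \;\le\; \exp\!\Bigl(-\tfrac{t\mu^2}{8\,\mathbb{E}Z_i^2}\Bigr) \;=\; \exp\!\bigl(-\Theta(t^{1-\eps p}(\eps\ln t)^2)\bigr),
\]
which is the first bullet. (A Hoeffding bound using only $Z_i \le M^p$ would give the strictly worse exponent $1-2\eps p$, so the variance-aware bound is essential.)

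For the upper tail I would use a coarser truncation $M_2 := (ct/\eps)^{1/p}$ for a suitably large absolute constant $c$. A union bound yields $\Pr[\max_i |Y_i|>M_2] = O(\eps/c) \le 1/4$, and on the complementary event $S$ equals the truncated sum $W := \sum_i |Y_i|^p \mathbf{1}_{|Y_i|\le M_2}$, whose expectation is $t\cdot\Theta(\ln M_2^p) = \Theta(t\ln t)$ (for $\eps$ not exponentially small in $t$). Markov's inequality then gives $\Pr[W > 4\mathbb{E}W] \le 1/4$, and combining the two events $\Pr[S > \Theta(t\ln t)] \le 1/2$. Since $T/\eps = \Theta(t\ln t)$, this is the second bullet.

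The main technical obstacle is pinning down the explicit constant $2^{(4+p)/2}$: this requires tracking the leading coefficient in the $p$-stable tail $\Pr[|Y_i|>y] \sim c_p y^{-p}$, calibrating $c$ in $M_2$, and balancing the two probability-$1/4$ splits above. The hypothesis $t\to\infty$ enters exactly here: it lets us treat the tail asymptotic as sharp over the truncation ranges $[1,M]$ and $[1,M_2]$ that drive the moment computations.
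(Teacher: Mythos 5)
Your proposal is correct and follows essentially the same route as the paper: truncate at $M=t^{\eps}$, compute $\E[Z_i]=\Theta(\eps\ln t)$ and $\E[Z_i^2]=\Theta(t^{\eps p})$ from the $p$-stable tail, set $T=t\E[Z_i]/2$, and apply the variance-based one-sided bound (the paper cites it as Maurer's inequality) for the lower tail, then a coarser truncation plus union bound plus Markov for the upper tail. The only cosmetic differences are that the paper truncates via $\min(Y_i,M)$ rather than an indicator, derives the moment estimates from explicit density bounds $\phi_p^{\pm}$ (Nelson) rather than the tail asymptotic, and uses the upper cutoff $M^{1/\eps}=t$ in place of your $(ct/\eps)^{1/p}$ --- all equivalent in substance, including your correct observation that the constant $2^{(4+p)/2}$ is tied to the leading tail coefficient.
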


\begin{proof}
First, we need an approximation of the probability density function of the $p$-stable distribution. We use the following theorem from~\cite[Theorem 42]{Nelson11}.
\begin{theorem}[\cite{Nelson11}]\label{thm:pdf}
Define $\phi_p^+$ and $\phi_p^-$ as follows:
$$\phi_p^-(x) = \frac{a}{x^{p+1}} - \frac{b}{x^3}$$
and
$$\phi_p^+(x) = \frac{2^{(p+1)/2}a}{x^{p+1}} + \frac{b}{x^3}$$
for certain constants $a,b$. Then for any $x\ge 1$, $\phi_p^-(x) \le \phi_p(x) \le \phi_p^+(x)$.
\end{theorem}

Since $x\rightarrow Ax$ is a linear map, we can assume wlog that $\|x\|_p=1$. Each coordinate of $Ax$ is an i.i.d. $p$-stable random variable with the scale parameter $1$. Let $Y_i$ denote the absolute value of the $i$th coordinate of $Ax$. Define $Z_{i, M}:=\min(Y_i, M)$. We first prove some properties of $Z_{i,M}$.

\begin{lemma} For $M\rightarrow \infty$, we have
\begin{align*}
\E[Z_{i, M}^p] &= \Theta(\ln M)\\
\E[Z_{i, M}^{2p}] &= \Theta(M^p)
\end{align*}
and in particular, $\E[Z_{i, M^{1/\eps}}^p] \le (2^{(1+p)/2}+o(1)) \E[Z_{i,M}^p]/\eps$.
\end{lemma}
\begin{proof}
Let $p_M$ be the probability that a standard $p$-stable random variable exceeds $M$. We can bound $p_M$ by
$$p_M \le \int_{M}^{\infty} \phi_p^+(x) dx = \frac{2^{(p+1)/2}a}{pM^p}+ \frac{b}{2M^2}$$
and
$$p_M \ge \int_{M}^{\infty} \phi_p^-(x) dx = \frac{a}{pM^p}- \frac{b}{2M^2}$$

First we bound $\E[Z_{i, M}^p]$.
\begin{align*}
\E[Z_{i,M}^p] &\le \int_{0}^{1} dx + \int_{1}^{M} x^p \phi_p^{+}(x) dx + p_M M^p\\
&\le O(1) + 2^{(p+1)/2}a\ln M
\end{align*}
Similarly
\begin{align*}
\E[Z_{i,M}^p] &\ge \int_{1}^{M} x^p \phi_p^{-}(x) dx\\
&\ge a\ln M - O(1)
\end{align*}

Next we bound $\E[Z_{i,M}^{2p}]$.
\begin{align*}
\E[Z_{i,M}^{2p}] &\le \int_{0}^{1} dx + \int_{1}^{M} x^{2p} \phi_p^{+}(x) dx + p_M M^{2p}\\
&\le (2^{(p+3)/2}a/p+o(1))M^p
\end{align*}
Similarly
\begin{align*}
\E[Z_{i,M}^{2p}] &\ge \int_{1}^{M} x^{2p} \phi_p^{-}(x) dx\\
&\ge (a/p-o(1))M^p
\end{align*}
\end{proof}

Set $M = t^{\eps}$ and $T=t\E[Z_{i,M}^p]/2 = \Theta(\eps t\ln t)$. We have $\Pr[\|Ax\|_p^p < T] \le \Pr[\sum_i Z_{i,M}^p < T]$. By an inequality by Maurer~\cite{Maurer03}.
$$\Pr[\sum_i Z_{i,M}^p < T] \le \exp\left(-\frac{T^2}{2\sum_i \E[Z_{i,M}^{2p}]}\right)=\exp(-\Theta(t^{1-\eps p}(\eps\ln t)^2))$$

On the other hand,
\begin{align*}
\Pr[\sum_i Y_i^p > 2^{(4+p)/2}\eps^{-1}T] &\le \Pr[\exists i:Y_i\ge M^{1/\eps}] + \Pr[\sum_i Y_i^p > 2^{(4+p)/2}\eps^{-1}T | \forall i: Y_i < M^{1/\eps}]\\
&\le t(2^{(p+1)/2}a/(pM^{p/\eps}) + b/(2M^{2/\eps})) + \frac{\E[\sum_i Y_i^p | \forall i: Y_i < M^{1/\eps}]}{2^{(4+p)/2}\eps^{-1}T}\\
&\le O(t^{1-p}) + \frac{\E[\sum_i Z_{i,M^{1/\eps}}^p]}{2^{(4+p)/2}\eps^{-1}T}\\
&\le O(t^{1-p}) + \frac{(2^{(1+p)/2}+o(1))\eps^{-1}T}{2^{(4+p)/2}\eps^{-1}T} < 1/2
\end{align*}
\end{proof}

To analyze the second step of the hash function, we use the uniform convexity and smoothness properties of $\ell_p$, see e.g.~\cite{BCL94}.
\begin{fact}
For any $1< p\le 2$,
\begin{itemize}
\item $\ell_p$ is $p$-uniformly smooth:
\begin{equation}\label{eq:smooth}
\forall x,y\in\ell_p,~\frac{\|x\|_p^p+\|y\|_p^p}{2} \le \left\|\frac{x+y}{2}\right\|_p^p + \left\|\frac{x-y}{2}\right\|_p^p
\end{equation}
\item $\ell_p$ is $2$-uniformly convex:
\begin{equation}\label{eq:convex}
\forall x,y\in\ell_p,~\frac{\|x\|_p^2+\|y\|_p^2}{2} \ge \left\|\frac{x+y}{2}\right\|_p^2 + (p-1)\left\|\frac{x-y}{2}\right\|_p^2
\end{equation}
\end{itemize}
\end{fact}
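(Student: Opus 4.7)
Both inequalities are classical facts about $L^p$ spaces; the plan is to reduce each to a scalar inequality on $\R$ and verify that by elementary calculus.

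For the smoothness inequality (\ref{eq:smooth}), the key observation is that $\|v\|_p^p = \sum_i |v_i|^p$ decomposes additively over coordinates. Thus the vector inequality follows from the scalar inequality
$$\frac{|a|^p + |b|^p}{2} \le \left|\frac{a+b}{2}\right|^p + \left|\frac{a-b}{2}\right|^p \qquad (a,b \in \R)$$
applied coordinatewise with $a = x_i,\ b = y_i$ and summed over $i$. By $p$-homogeneity and the symmetries $(a,b)\leftrightarrow(b,a)$ and $(a,b)\leftrightarrow(-a,-b)$, the scalar inequality reduces to a one-variable statement on $[0,1]$ (e.g. for $a=1$, $b=t$), which is routine calculus using $1<p\le 2$: the endpoints $t=0$ and $t=1$ give equality or strict slack, and the sign of the appropriate derivative is controlled by the concavity of $x\mapsto x^{p-1}$.

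For the convexity inequality (\ref{eq:convex}), the difficulty is that $\|\cdot\|_p^2$ is \emph{not} additive over coordinates when $p\ne 2$, so no coordinatewise reduction is available. The standard route, going back to Hanner and refined by Ball, Carlen, and Lieb~\cite{BCL94}, proceeds in two steps. First, establish Hanner's inequality
$$\|x+y\|_p^p + \|x-y\|_p^p \ge \bigl(\|x\|_p+\|y\|_p\bigr)^p + \left|\, \|x\|_p - \|y\|_p \,\right|^p,$$
valid for $1\le p\le 2$, which itself can be reduced to a two-variable coordinatewise scalar inequality. Second, derive (\ref{eq:convex}) from Hanner by substituting $u=\|(x+y)/2\|_p$ and $v=\|(x-y)/2\|_p$ and verifying a one-variable inequality relating $\bigl((u+v)^p+(u-v)^p\bigr)^{2/p}$ to $u^2+(p-1)v^2$; the sharp constant $(p-1)$ emerges from the second-order Taylor expansion in $v$ at $v=0$.

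The main obstacle is Hanner's inequality itself, which couples the two norms $\|x+y\|_p$ and $\|x-y\|_p$ in a way that is not manifestly pointwise. One clean approach reduces it, via rearrangement, to a two-dimensional case where it becomes a single-parameter inequality verifiable by calculus; alternatively, the interpolation method of~\cite{BCL94} yields both (\ref{eq:smooth}) and (\ref{eq:convex}) with sharp constants in a unified framework, at the cost of somewhat heavier machinery. Either way, the essential difficulty is packaged into verifying a single scalar inequality in one or two real variables.
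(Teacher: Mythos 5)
The paper does not actually prove this Fact --- it states it as known and cites \cite{BCL94} --- so what matters is whether your sketch is a faithful account of how the inequalities are established, and it is. For (\ref{eq:smooth}) the coordinatewise reduction is exactly right: since all three terms are $p$-th powers of $\ell_p$ norms, the inequality is additive over coordinates, and after substituting $u=(a+b)/2$, $v=(a-b)/2$ the scalar statement becomes $\tfrac12\left(|u+v|^p+|u-v|^p\right)\le |u|^p+|v|^p$, which follows in two lines from the concavity and then the subadditivity of $s\mapsto s^{p/2}$ for $p\le 2$; no endpoint/derivative analysis is really needed. For (\ref{eq:convex}) you correctly identify that no coordinatewise reduction exists and that the content lies in a deeper inequality. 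Your route via Hanner is valid, but it is heavier than what \cite{BCL94} actually do: their argument bypasses Hanner, proving instead the pointwise ``two-point'' inequality $\bigl(\tfrac12(|a+b|^p+|a-b|^p)\bigr)^{2/p}\ge a^2+(p-1)b^2$ for scalars, then lifting it to $L^p$ by integrating its $p/2$-th power and invoking the superadditivity (reverse Minkowski) of $\|\cdot\|_{p/2}$ on nonnegative functions. In either route, two small steps are elided in your sketch: the power-mean inequality $\bigl(\tfrac12(A^p+B^p)\bigr)^{2/p}\le\tfrac12(A^2+B^2)$ needed to pass from the $p$-th-power form to the squared form (\ref{eq:convex}), and the factor $\tfrac12$ that belongs inside your displayed expression $\bigl((u+v)^p+(u-v)^p\bigr)^{2/p}$. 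Your remark that the sharp constant $p-1$ is dictated by the second-order Taylor expansion at $v=0$ is correct.
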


Finally we are ready to prove the main technical lemma determining the parameter $\rho$. It can be viewed as a generalization of~\cite[Lemma 3.2.3]{Andoni09}. Before proceeding to the lemma, we want to note that conditioned on the fact that the whole space $\R^t$ is covered by the lattices (which happens with high probability by Lemma~\ref{lem:cover}), for any two point $x, y\in\R^t$, the probability that they are contained in the same ball in the partition of $\R^t$ defined by the shifted lattices of $h$ is exactly $\frac{\Vol(B_p(x, w) \cap B_p(y,w))}{\Vol(B_p(x,w) \cup B_p(y,w))}$. Removing the conditioning only changes the collision probabilities by at most $\delta=\exp(-\Theta(t))$, which is negligible. In a nutshell, the proof combines two observations. First, by Lemma~\ref{lem:concentration}, the mapping $x\rightarrow Ax$ does not distort distances by a large amount. Second, for points in $\R^t$, the volumes involved in collision probabilities can be approximated by volumes of balls of different radii. Because the ratio of volumes of balls of different radii can easily be computed from the ratio of the radii, we can approximate the collision probabilities.

\begin{lemma}\label{lem:rho}
Let $p$ be a constant in $(1,2)$. Let $x,y$ be two points in $\R^d$. Let $p_1$ be the collision probability when $\|x-y\|_p\le 1$ and $p_2$ be the collision probability when $\|x-y\|_p\ge c$. Then, for $w=\Theta(c\ln c), t=\Theta(w^p)$, we have $\rho = \frac{\ln p_1}{\ln p_2} = O((\ln c)^2 c^{-p})$ as $c\rightarrow \infty$.
\end{lemma}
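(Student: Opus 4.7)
The plan is in two phases. First, apply Lemma~\ref{lem:concentration} to the vector $x-y$ to pass from $\R^d$ to $\R^t$: the second bullet gives, for the close pair ($\|x-y\|_p \le 1$), that $\|A'(x-y)\|_p \le C_1 := (2^{(4+p)/2}/\eps)^{1/p}$ with probability at least $1/2$; the first bullet gives, for the far pair ($\|x-y\|_p \ge c$), that $\|A'(x-y)\|_p \ge c$ except with probability $\exp(-\Theta(t^{1-\eps p}(\eps\ln t)^2))$. By the observation in the paragraph just before the lemma, the $\R^t$-collision probability at distance $r$ is, up to the negligible covering error $\delta = e^{-\Theta(t)}$, equal to $p^{\ast}(r) := I(r)/(2V_t - I(r))$, where $I(r) = \Vol(B_p(x',w) \cap B_p(y',w))$ for $\|x'-y'\|_p = r$. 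Since $p^{\ast}(\cdot)$ is decreasing, this gives $p_1 \ge \tfrac12\, p^{\ast}(C_1)$ and $p_2 \le p^{\ast}(c) + \exp(-\Theta(\cdots))$; $\eps$ will be picked so the additive error in $p_2$ is absorbed.

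Second, I bound $I(r)$ using the uniform smoothness and convexity of $\ell_p$. Plugging $u = z - x'$, $v = z - y'$ into \eqref{eq:smooth} yields $\|z-x'\|_p^p + \|z-y'\|_p^p \le 2\|z-m\|_p^p + 2(r/2)^p$ (with $m = (x'+y')/2$), so the condition $\|z-m\|_p^p \le w^p/2 - (r/2)^p$ places $z$ in both balls; hence $I(r) \ge V_t(r_\cap/w)^t$ with $r_\cap^p = w^p/2 - (r/2)^p$. The same choice in \eqref{eq:convex} forces any $z \in B_p(x',w) \cap B_p(y',w)$ to satisfy $\|z-m\|_p^2 \le w^2 - (p-1)(r/2)^2$, yielding $I(r) \le V_t(r_U/w)^t$ with $r_U^2 = w^2 - (p-1)(r/2)^2$. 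Taking $w = \Theta(c\ln c)$ and $t = \Theta(w^p)$, the convexity bound at $r = c$ gives $\ln(1/p_2) \ge \Omega(tc^2/w^2) = \Omega(c^p \ln^{p-2} c)$.

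The delicate step is the close-pair lower bound on $I(C_1)$: the smoothness-midpoint estimate alone yields $I(C_1)/V_t \ge 2^{-t/p}$, i.e.\ $\ln(1/p_1) = O(t)$, which is too weak (giving only the trivial $\rho = O(\ln^2 c)$). To recover the missing $c^p$ factor I will instead exploit the exact slicing identity $I(r)/V_t = \int_{r/(2w)}^{1}(1-v^p)^{(t-1)/p}\,dv\,/\int_{0}^{1}(1-v^p)^{(t-1)/p}\,dv$ together with the Laplace-type asymptotic $(1-v^p)^{(t-1)/p} \approx e^{-(t-1)v^p/p}$ near $v=0$; for $r = C_1 \ll w$ this tightens the close-pair bound to $\ln(1/p_1) = O(C_1^p) = O(1/\eps)$. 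Finally, choosing $\eps = \Theta(1/\ln^p c)$ makes $C_1^p = \Theta(\ln^p c)$ and simultaneously renders the concentration error negligible against $p^{\ast}(c)$; combining, $\rho \le O(\ln^p c)/\Omega(c^p \ln^{p-2} c) = O((\ln c)^2/c^p)$, as claimed. The overall structure follows the $\ell_2$-treatment in~\cite{Andoni09}.
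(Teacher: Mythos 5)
Your overall architecture matches the paper's: you use Lemma~\ref{lem:concentration} to pass to $\R^t$ with $\|x'-y'\|_p=O(\eps^{-1/p})$ for close pairs and $\|x'-y'\|_p\ge c$ for far pairs, your far-pair bound $\ln(1/p_2)=\Omega((p-1)tc^2/w^2)$ via $2$-uniform convexity \eqref{eq:convex} is exactly the paper's, and the closing arithmetic with $w=\Theta(c\ln c)$, $t=\Theta(w^p)$ and $\eps$ polylogarithmically small is sound. You also correctly isolate the crux: the naive smoothness/midpoint containment only gives $\Vol(B_p(x',w)\cap B_p(y',w))\ge V_t\,2^{-t/p}$, hence $\ln(1/p_1)=O(t)$ and the useless $\rho=O((\ln c)^2)$.

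However, your fix for that crux has a genuine gap. The ``exact slicing identity'' $I(r)/V_t=\int_{r/(2w)}^{1}(1-v^p)^{(t-1)/p}dv\big/\int_{0}^{1}(1-v^p)^{(t-1)/p}dv$ is not an identity for $\ell_p$ with $p\ne 2$, for two reasons. First, hyperplane sections of $B_p(0,w)$ orthogonal to a generic direction $x'-y'$ are not dilates of a single $(t-1)$-dimensional body, so their volumes are not proportional to $(w^p-h^p)^{(t-1)/p}$; that formula holds only when $x'-y'$ is a coordinate direction (or for $p=2$, by rotation invariance, which is why the $\ell_2$ treatment in~\cite{Andoni09} can use exactly this cap computation). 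Second, $B_p(x',w)\cap B_p(y',w)$ is not a union of two hyperplane caps, because the bisector $\{z:\|z-x'\|_p=\|z-y'\|_p\}$ is not a hyperplane for $p\ne 2$. In fact the intersection volume depends on the direction of $x'-y'$ and not only on $\|x'-y'\|_p$, so whatever replaces the slicing step must be direction-free. That is precisely what the paper's inclusion \eqref{eq:vol-lb} supplies: instead of asking the midpoint ball to sit inside the intersection, one shows that any point of $B_p((x'+y')/2,\,w(1-(2+\gamma)(r/(2w))^p)^{1/p})$ that misses the intersection must lie in one of two balls of only slightly smaller radius $w(1-(2+2\gamma)(r/(2w))^p)^{1/p}$ around $x'$ or $y'$, since otherwise $p$-uniform smoothness \eqref{eq:smooth} forces $\|x'-y'\|_p>r$. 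Subtracting those two ball volumes costs only a factor $1-\exp(-\Omega(\gamma t r^p w^{-p}))$, yielding $\Vol(A_1)\ge V_t(1-O(r^p/w^p))^{t/p}(1-o(1))$ and hence $\ln(1/p_1)=O(t r^p/w^p)=O(r^p)$ --- the bound you wanted, but obtained from smoothness alone. You would need to replace your slicing argument with one of this kind for the proof to go through.
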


\begin{proof}
Let $x' = A'x, y' = A'y$. We first analyze the volume of $A_1 = B_p(x',w) \cap B_p(y',w)$ when $\|x'-y'\|_p \le r$. We will show 
\begin{align}
\nonumber B_p((x'+y')/2, w(1-(2+\gamma)r^p/(2w)^p)^{1/p})\subset A_3 = A_1&\cup B_p(x', w(1-(2+2\gamma)r^p/(2w)^p)^{1/p})\\
&\cup B_p(y', w(1-(2+2\gamma)r^p/(2w)^p)^{1/p})\label{eq:vol-lb}
\end{align}
for arbitrary $\gamma>0$. Setting $\gamma$ close to $0$ results in a better constant in the final bound of $\rho$ but for ease of understanding, we can set $\gamma=1$. Consider a point $z\in B_p((x'+y')/2, w(1-(2+\gamma)r^p/(2w)^p)^{1/p}) \setminus A_3$. Wlog, we assume $\|x'-z\|_p \ge \|y'-z\|_p$. By the assumptions, we have 
\begin{align*}
\|(x'+y')/2-z\|_p &\le w\left(1-\frac{(2+\gamma)r^p}{(2w)^p}\right)^{1/p}\\
\|x'-z\|_p &> w\\
\|y'-z\|_p &> w\left(1-\frac{(2+2\gamma)r^p}{(2w)^p}\right)^{1/p}
\end{align*}
Applying~\ref{eq:smooth} to $x'-z$ and $y'-z$, we have:
\begin{align*}
\left\|\frac{x'-y'}{2}\right\|_p^p &\ge \frac{\|x'-z\|_p^p+\|y'-z\|_p^p}{2} - \left\|\frac{x'+y'}{2}-z\right\|_p^p\\
&> \frac{w^p+w^p(1-(2+2\gamma)r^p/(2w)^p)}{2} - w^p(1-(2+\gamma)r^p/(2w)^p)\\
&\ge (r/2)^p
\end{align*}
This contradicts the assumption that $\|x'-y'\|_p \le r$. In other words, there is no such point $z$ and $B_p((x+y)/2, w(1-(2+\gamma)r^p/(2w)^p)^{1/p})\subset A_3$. Note that for any $\alpha\in \R$ and $u\in\R^t$, we have $\Vol(B_p(u, \alpha w)) = \alpha^t V_t$. Applying this fact to~(\ref{eq:vol-lb}), we have
\begin{align*}
\Vol(A_1) &\ge V_t \left(\left(1-\frac{(2+\gamma)r^p}{(2w)^p}\right)^{t/p} - 2 \left(1-\frac{(2+2\gamma)r^p}{(2w)^p}\right)^{t/p}\right)\\
&\ge V_t \left(1-\frac{(2+\gamma)r^p}{(2w)^p}\right)^{t/p}\left(1 - 2 \left(1-\frac{\gamma r^p/2}{(2w)^p}\right)^{t/p}\right)\\
&\ge V_t\left(1-\frac{(2+\gamma)r^p}{(2w)^p}\right)^{t/p} (1-\exp(-\Omega(\gamma t r^p w^{-p})))
\end{align*}
By Lemma~\ref{lem:concentration},  when $\|x-y\|_p \le 1$, with probability at least $1/2$, we have $\|x'-y'\|_p=O(\eps^{-1})$. Therefore we get an upper bound for $\ln(1/p_1)$,
\begin{align*}
\ln(1/p_1) &\le \ln \left(2 \cdot \frac{2V_t-\Vol(A_1)}{\Vol(A_1)}\right)\\
&\le \ln 4 - \ln(\Vol(A_1))\\
&\le \ln 4 + \frac{O(2+\gamma)\cdot t/(p (2\eps w)^p)}{1-O(2+\gamma)\cdot 1/(2\eps w)^p} + \exp(-\Omega(\gamma t\eps^{-p}w^{-p}))\\
&\le O(2+\gamma)\cdot t/(p(2\eps w)^p)
\end{align*}
The second to last inequality follows from the inequality $\ln(1-x) \ge -x/(1-x)~\forall x\in [0,1)$.

Next, we analyze the volume of $A_2 = B_p(x',w) \cap B_p(y',w)$ when $\|x'-y'\|_p \ge c$. We will show $A_2\subset B_p((x'+y')/2, w\sqrt{1-(p-1)(c/w)^2/4})$. Let $z$ be an arbitrary point in $A_2$. Applying~\ref{eq:convex} to $x'-z$ and $y'-z$, we have:
$$\left\|\frac{x'+y'}{2}-z\right\|_p^2 \le \frac{\|x'-z\|_p^2+\|y'-z\|_p^2}{2} - (p-1)\left\|\frac{x'-y'}{2}\right\|_p^2\le w^2-(p-1)(c/2)^2$$
Thus,
$$\Vol(A_2) \le V_t (1-(p-1)(c/w)^2/4)^{t/2}$$
By Lemma~\ref{lem:concentration},  when $\|x-y\|_p \ge c$, with probability at least $1-P = 1-\exp(-\Theta(t^{1-\eps p}(\eps \ln t)^2))$, we have $\|x'-y'\|_p\ge c$. Therefore, we get a lower bound for $\ln (1/p_2)$,
\begin{align*}
\ln(1/p_2) \ge& \ln\frac{2V_t-\Vol(A_2)}{\Vol(A_2) + (2V_t-\Vol(A_2))P}\\
\ge&\ln\frac{1-\Vol(A_2)/(2V_t)}{\max(\Vol(A_2)/V_t, 2P)}\\
\ge& \ln(1/2) +\min((p-1)tc^2/(8w^2), -\ln 2 P)\\
\ge& (1-o(1))(p-1)tc^2/(8w^2)
\end{align*}
when $(p-1)tc^2/(8w^2) < -\ln 2 P = \Theta(t^{1-\eps p}(\eps \ln t)^2)-\ln 2$. In other words, we need $w/c = \Omega(t^{\eps p/2}/(\eps \ln t))$. Combining the bounds for $p_1$ and $p_2$, we have

$$\rho = \frac{\ln (1/p_1)}{\ln (1/p_2)} \le \frac{O(2+\gamma)\cdot (w/c)^{2-p}}{\eps^p 2^p p(p-1)c^p}$$

We get the stated bound by choosing $w=\Theta(c\ln c), t=\Theta(w^p), \eps=\Theta(\ln \ln t/\ln t)$.
\end{proof}
\begin{remark}
It is possible to slightly tighten the bound by setting $w=\Theta(c), t=\Theta(w^p), \eps=\Theta(1/\ln t)$. The constant $2^{(4+p)/2}$ in Lemma~\ref{lem:concentration} becomes a larger constant depending on the constants in Theorem~\ref{thm:pdf}, but the rest of the proof remains the same. This setting gives $\rho=O((\ln c/c)^p)$, where the O hides a constant depending on the constants in Theorem~\ref{thm:pdf}.
\end{remark}
\section{Discussion}
The second half of the argument uses only the uniform smoothness and convexity properties of the norm while the first half is tailored to $\ell_p$. This leads to the question of whether one can generalize the argument here to get an algorithm for approximate nearest neighbor search for a more general class of norms.
\section{Acknowledgments}
We thank Jelani Nelson and Ilya Razenshteyn for helpful comments.
\bibliographystyle{alpha}
\bibliography{lp-nns}
\end{document}